\providecommand{\U}[1]{\protect\rule{.1in}{.1in}}
\newtheorem{theorem}{Theorem}
\newtheorem{h}{Hypothesis}
\newtheorem{example}[theorem]{Example}
\newtheorem{remark}[theorem]{Remark}
\newenvironment{proof}[1][Proof]{\noindent\textbf{#1.} }{\ \rule{0.5em}{0.5em}}
\begin{document}

\title{On the skew and curvature of the implied and local volatilities}
\author{Elisa Al\`{o}s\thanks{Universidad Pompeu Fabra and Barcelona School of Economics, Barcelona, Spain. }
\and David Garc\'ia-Lorite\thanks{CaixaBank, Madrid, Spain}
\and Makar Pravosud\thanks{Universidad Pompeu
Fabra, Barcelona, Spain}}
\maketitle

\begin{abstract}
In this paper, we study the relationship between the short-end of the local and the implied volatility surfaces. Our results, based on Malliavin calculus techniques, recover the recent $\frac{1}{H+3/2}$ rule (where $H$ denotes the Hurst parameter of the volatility process) for rough volatilitites (see Bourgey, De Marco, Friz, and Pigato (2022)), that states that the short-time skew slope of the at-the-money implied volatility is $\frac{1}{H+3/2}$ of the corresponding slope for local volatilities. Moreover, we see that the at-the-money short-end curvature of the implied volatility can be written in terms of the short-end skew and curvature of the local volatility and vice versa. Additionally, this relationship depends on $H$.
\bigskip

\textbf{Key words. }Stochastic volatility; local volatility; rough volatility; Malliavin calculus

\bigskip

\textbf{AMS subject classification. }60G44, 60H07, 91G20

\end{abstract}

\section{Introduction}
Local volatilities  are the main tool in real market practice (see Dupire (1994)), since they are the simplest models that capture the empirical implied volatility surface. They are an example of mimicking process (see Gy\"ongy (1986)), in the sense that they are one-dimensional models that can reproduce the marginal distributions of asset prices $S_t$.
In a local volatility model the volatility process is a deterministic function $\sigma(t,S_t)$ of time and the underlying asset price. The values of this function can be computed via Dupire's formula (see again Dupire (1994)).The plot of this function $\sigma$, is called the local volatility surface.

One challenging problem in this context is the study of the relationship between the implied and local volatilities. Even when both surfaces are similar, we can easily notice that the short-end local volatility smiles are more pronounced than the implied volatility smiles. In fact, some empirical studies (see Derman, Kani, and Zou (1996))  show that for short and intermediate maturities the at-the-money implied volatility skew is approximately half of the skew of the local volatility. This property is widely known as the {\it one-half rule}. 

There have been many attempts to address this phenomena from the analytical point of view. Classical proofs of this property for stochastic volatility models can be found in the literature. For example, in Derman, Kani, and Zou (1996) or in Gatheral (2006), this property is deduced from the representation of the implied volatilities as averaged local volatilities. In Lee (2001), the expansions of the implied and local volatility allow to proof this property by a direct comparison. In Al\`os and Garc\'ia-Lorite (2021), Malliavin calculus techniques give a representation of the short-limit at-the-money (ATM) implied volatility skew as an averaged local volatility skew, from which the one-half rule follows directly.

Nevertheless, recent studies (see Bourgey, De Marco, Friz, and Pigato (2022)) state that the one-half rule does not hold for the rough volatility models, where the volatility process is driven by a fBm with Hurst parameter $H<\frac12$. More precisely, the ATM short-end implied volatility skew is $\frac{1}{H+\frac32}$ of the ATM short-end local volatility skew. This result can be obtained via large deviations techniques.

In this paper we show how Malliavin calculus techniques lead us to an easy proof of $\frac{1}{H+\frac32}$ rule. Additionally, we study the relationship between the curvature of implied and local volatilities. In particular, we show how the ATM short-end implied volatility curvature can be represented in terms of the ATM short-end limit skew and the curvature of the corresponding local volatility, and vice versa. Moreover, we explain why $\frac{1}{H+\frac32}$ rule is compatible with real market practices, where the short-end local volatility is usually computed via a regularization procedure.

This paper is organized as follows. In Section 2, we introduce the basic concepts and definitions of Malliavin calculus. Section 3 is devoted to the study of the at-the-money skew slope for a local volatility model. In Section 4 we analyze the relation between the curvatures of the rough and local volatilities. Finally, we present some numerical examples in Section 5.

\section{Basic concepts of Malliavin calculus}
In this section we recall the key tools of Malliavin calculus that we use in this paper. We refer to Al\`os and Garc\'ia-Lorite (2021) for a deeper introduction to this topic and its applications to finance. 

\subsection{Basic definitions}
If
\(Z=(Z_t)_{t \in [0,T]}\) is a standard Brownian motion, we denote by ${\cal S}$  the set of random variables of the form
\begin{equation}
\label{eq:2.1}
F=f(Z(h_{1}),\ldots ,Z(h_{n})),  
\end{equation}
where $h_{1},\ldots ,h_{n}\in L^2([0,T])$, $Z(h_i)$ denotes the Wiener integral of $h_i$, for $i=1,..,n$, and $f\in C_{b}^{\infty }(\mathbb{R}^n) $ 
(i.e., $f$ and all its partial derivatives are bounded). If $F\in \cal S$, the Malliavin 
derivative of $F$ with respect to $Z$, $D^ZF$,  is defined
as the stochastic process in  $L^{2}(\Omega \times [0,T])$ given by
\begin{equation*}
D_{s}^ZF=\sum_{j=1}^{n}{\frac{\partial f}{\partial x_{j}}}(W(h_{1}),\ldots ,W_(h_{n}))(s)h_j(s).
\end{equation*}
Moreover, for $m\ge 1$, we can define the iterated Malliavin derivative operator $D^{m,Z}$, as
\begin{equation*}
D_{s_{1},\ldots ,s_{m}}^{m,Z}F=D_{s_{1}}^Z\ldots D_{s_{m}}^ZF,\qquad s_{1},\ldots
,s_{m}\in [0,T].
\end{equation*}
The operators $D^{m,Z}$ are closable  in $
L^{2}(\Omega )$ and we denote by ${\mathbb{D}}_Z^{n,2}$ the
closure of ${\cal S}$ with respect to the norm
$$
||F||_{n,2}=\left( E\left| F\right|
^{p}+\sum_{i=1}^{n}E||D^{i,Z}F||_{L^{2}([0,T]^{i})}^{2}\right) ^{\frac{1}{2}}.
$$
Notice that the Malliavin derivative operator satisfies the {\it chain rule}. That is, given $f\in \mathcal{C}^{1,2}_Z$, and $F\in \mathbb{D}^{1,2}_Z$, the random variable $f(F)$ belongs to $\mathbb{D}^{1,2}_Z$, and $D^Zf(F)=f'(F)D^WF$.
We will also make use of the notation $\mathbb{L}^{n,2}=\mathbb{D}_{Z}^{n,p}(L^2([0,T])).$\\

The adjoint of the derivative operator $D^Z$ is the divergence 
operator $\delta^Z $, which coincides with the Skorohod integral. Its domain, denoted by Dom $\delta $, is the set of processes 
$u\in L^{2}(\Omega \times [0,T])$ such that there exists a random variable
$\delta^Z (u)\in L^{2}(\Omega )$ such that
\begin{equation}\label{eq:ibpf}
E(\delta^Z (u)F)=E\left(\int_{0}^{T}(D_{s}^ZF)u_{s}ds\right),\qquad \hbox{\rm for every}
\;\;F\in {\cal S}.  
\end{equation}
We use the notation $\delta^Z (u)=\int_{0}^{T}u_{s}dZ_{s}$.  It is well known that $\delta $ is an extension of the It\^o integral. That is, $\delta$ applied to an adapted and square integrable processes,  coincides with the classical It\^o  integral. Moreover, the space 
${\mathbb{L}}^{1,2}$ is included in the domain of $\delta$. \\

From the above relationship between the operators $D^Z$ and $\delta^Z$, it is easy to see that, for an It\^o process of the form
$$
X_t=X_0+\int_0^t a_sds+\int_0^t b_s dZ_s,
$$
where $a$ and $b$ are adapted processes in $\mathbb{L}^{1,2}_Z$, its Malliavin derivative is given by
\begin{equation}
\label{derito}
D_u^ZX_t=\int_0^t D^Z_ua_sds+b_u{\bf1}_{[0,t]}(u)+\int_0^t D^Z_u b_s dZ_s.
\end{equation}
Then, if we consider the equation of the form
$$
X_t=X_0+\int_0^t a (s,X_s)ds+\int_0^t b(s,X_s) dZ_s,
$$
where $a(s,\cdot)$ and $b(s,\cdot)$ are differentiable functions with  uniformly bounded derivatives,
a direct application of (\ref{derito}) allows us to see that
\begin{equation}
\label{dereq}
D_u^ZX_t=\int_u^t \frac{\partial a}{\partial x} (s,X_s)D_u^Z X_sds+b(u,X_u)+\int_u^t \frac{\partial b}{\partial x} (s,X_s) D_u^Z X_sdZ_s.
\end{equation}
Notice that the above equality also holds, for example, if $a$ and $b$ are globally Lipschitz functions with
polynomial growth (see Theorem 2.2.1 in Nualart (2006)), replacing $\frac{\partial a}{\partial x}$ and $\frac{\partial b}{\partial x}$ by an adequate processes.

\subsection{Malliavin calculus for local volatilities}\label{local-volatility-model}
Consider a local volatility model of the form
\begin{equation}
\label{localvol}
S_t=S_0+\int_0^t \sigma(u,S_u)S_u dW_u,
\end{equation}
where $W$ is a Brownian motion defined in a probability space $(\Omega, \mathcal{F}, P)$, and  where we take the interest rate $r=0$ for the sake of simplicity. In order to study the Malliavin differentiability of the asset price $S$, we introduce consider the following hypotheses

\begin{h} \label{Hyp1} $\sigma$ is continuous and there exist two constants $c_1,c_2>0$ such that, for all $(t,x)\in [0,T]\times \mathbb{R}$, $$c_1<\sigma (t,x)<c_2.$$
\end{h}
\begin{h} \label{Hyp2}
There exist two constants $C>0$ and $\gamma\in (0,\frac12)$ such that, for all $t\in [0,T]$, 
$\sigma (t,\cdot)$ is differentiable and $$S_t\partial_x \sigma (t,S_t)\leq Ct^{-\gamma}.$$
\end{h}
Notice that Hypothesis \ref{Hyp2} is equivalent to the following $$\partial_x \hat\sigma (t,S_t)\leq Ct^{-\gamma},$$ where $\hat\sigma$ denotes the local volatility in terms of the log-price (that is, $\hat\sigma(t,x)=\sigma(t,e^x)$). 
According to (\ref{dereq}), the Malliavin derivative of the random variable $S_t$ is given by

$$
D_r^WS_t=\sigma(r,S_r)S_r+\int_r^t a(u,S_u) D^W_rS_udW_u,
$$
where $r<t$ and $a(u,S_u):\partial_x\sigma(u,S_u)S_u+\sigma(u,S_u)$. Notice that as $\sigma$ is bounded and \ref{Hyp1} holds for all $p\geq 2$, the Burkh\"older's inequality gives us that
\begin{eqnarray}
E(D_r^WS_t)^p&\leq &c_p\left[ E(\sigma(r,S_r)S_r)^p+\left(\int_r^t u^{-2\gamma}E _r(D^W_rS_u)^2du\right)^\frac{p}{2}\right]\nonumber\\
&\leq & c_p \left[(E(S_r)^p+\left(\int_r^t u^{-2\gamma}\right)^{\frac{p}{p-2}}\left(\int_r^t u^{-2\gamma}E _r(D^W_rS_u)^pdu\right)\right],
\end{eqnarray}
for some positive constant $c_p$. This implies, by Gronwall's lemma, that for all $t<T$ $E_r(D_r^WS_t)^p\leq c_p(\sigma(r,S_r)S_r)^p$, which leads to $E_r(D_r^WS_t)^p\leq c_p'$, for some positive constant $c_p'$. Moreover, a direct application of It\^o's lemma gives us that
\begin{equation}
\label{derlv}
D_r^WS_t=\sigma(r,S_r)S_r\exp\left(-\frac12 \int_r^t a^2(u,S_u)du+\int_r^t a(u,S_u) dW_u\right)
\end{equation}

In order to study the second Malliavin derivative of $S$, we introduce the following hypothesis
\begin{h} \label{Hyp3}
There exist two constants $C>0$ and $\gamma\in(0,\frac14)$ such that, for all $t\in [0,T]$, 
$\sigma (t,\cdot)$ is twice differentiable and $$S_t^2\partial_{xx}^2 \sigma (t,S_t)\leq Ct^{-2\gamma}.$$
\end{h}
Then, under Hypotheses \ref{Hyp1},  \ref{Hyp2}, and \ref{Hyp3}, the second 
Malliavin derivative of the random variable $S_t$ is given by
\begin{eqnarray}
D_\theta^W D_r^WS_t&=&a(r,S_r)D_\theta^W S_r+\int_r^t \partial_x a(u,S_u) D^W_r S_u D^W_\theta S_udW_u\nonumber\\
&+&\int_r^t   a(u,S_u) D^W_r D^W_\theta S_udW_u
\end{eqnarray}
for $\theta<r$. Notice that as  Hypotheses \ref{Hyp1},  \ref{Hyp2}, and \ref{Hyp3} hold the stochastic integrals are well defined. Consequently, a similar arguments, as in the case of the first Malliavin derivative, give us that $E(D_\theta D_r^WS_t)^p\le c_p$ for some positive constant $c_p$. 
Moreover,
\begin{eqnarray}
&&D_\theta^WD_r^WS_t\\
&&=a(r,S_r)D_\theta^W S_r \exp\left(-\frac12 \int_r^t a^2(u,S_u)du+\int_r^t a(u,S_u) dW_u\right)\nonumber\\
&&+\sigma(r,S_r)S_r\exp\left(-\frac12 \int_r^t a^2(u,S_u)du+\int_r^t a(u,S_u) dW_u\right)\nonumber\\
&&\times\left(-\frac12 \int_r^t D_\theta^W(a^2(u,S_u))du+\int_r^t D_\theta^W (a(u,S_u)) dW_u\right)\\
&&=a(r,S_r)\sigma(\theta,S_\theta)S_\theta\exp\left(-\frac12 \int_\theta^ta^2(u,S_u)du+\int_\theta^t
a(u,S_u) dW_u\right)\nonumber\\
&&+\sigma(r,S_r)S_r\exp\left(-\frac12 \int_r^t a^2(u,S_u)du+\int_r^t a(u,S_u) dW_u\right)\nonumber\\
&&\times\left(-\frac12 \int_r^t D_\theta^W(a^2(u,S_u))du+\int_r^t D_\theta^W (a(u,S_u)) dW_u\right).
\end{eqnarray}
\begin{remark} Consider $t>0$. 
Notice that under the Hypotheses \ref{Hyp1},  \ref{Hyp2}, and \ref{Hyp3} for $\theta, r\to t$, 
\begin{equation}
\label{first}
D_r^WS_t\to \sigma(t,S_t)S_t,
\end{equation}
and
\begin{equation}
\label{second}
D_\theta^WD_r^WS_t \to a(t,S_t)\sigma(t,S_t)S_t=\partial_x\sigma(t,S_t)S_t^2+\sigma^2(t,S_t)S_t,
\end{equation}
a.s.
\end{remark}

Finally, let us consider the following hypothesis
\begin{h} \label{Hyp4}
There exist two constants $C>0$ and $\gamma\in(0,\frac13)$ such that, for all $t\in [0,T]$, 
$\sigma (t,\cdot)$ is twice differentiable and $$S_t^3\partial_{xxx}^3 \sigma (t,S_t)\leq Ct^{-3\gamma}.$$
\end{h}
Using the same arguments as before, one can see that under this hypothesis $E|D_u^WD_\theta D_r^WS_t|^p\le c_p$ for some positive constant $c_p$, and for all $u<\theta<r<t$. 

\section{The skew}\label{skew}

Consider a risk-neutral probability model for asset prices of the form
\begin{equation}
\label{themodel}
dS_t=\sigma_t S_t (\rho dW_+\sqrt{1-\rho^2} dB_t),
\end{equation}
where we assume the interest rate to be zero, $\rho\in [-1,1]$, $W$ and $B$ are two independent Brownian motions. We denote by  $\mathcal{F}^W$ and  $\mathcal{F}^B$ the 
$\sigma$-algebra generated by $W$ and $B$, respectively, and $\mathcal{F}:=\mathcal{F}^W \vee \mathcal{F}^B$. Moreover, $\sigma$ is a stochastic process adapted to the filtration generated by $W$. Notice that (\ref{themodel}) includes the cases of local volatilities, classical stochastic volatility models (where $\sigma$ is assumed to be a diffusion) and fractional volatilities (where $\sigma$ is driven by a fractional Brownian motion). In particular, it includes the case of rough volatilities (fractional volatilities with Hurst parameter $H<\frac12$).\\

We denote by $C_{BS}(t, T,X_t,k,\sigma)$ the Black-Scholes price at time $t$ of a European call  with time to maturity $T$, log-strike price $k:=\ln K$, and by  $I_t(T,k)$ the corresponding Black-Scholes implied volatility. Moreover, we denote as $k^*= \ln(S_t)$ the at-the-money strike.\\

Now we introduce the general results on the short-end behaviour of the volatility skew that we use in this section.

\subsection{Short-end limit of the skew slope}
Let us fix $t\in [0,T]$ and consider the following hypotheses.

\begin{h} \label{Hyp5}
The process $\sigma=(\sigma_s)_{s\in [t,T]}$ is positive and continuous a.s., and satisfies that for all $s \in [t,T]$,
$$
c_1 \leq \sigma_s\leq c_2,
$$
for some positive constants $c_1$ and $c_2$.
\end{h}
\begin{h}\label{Hyp6}
$\sigma\in \mathbb{L}^{2,2}_W$, and there exist $C>0$ and $H\in (0,1)$ such that for all $t\leq \tau \leq \theta \leq r \leq u \leq T$ and for all $p>0$
\begin{align*}
\begin{split}
(E |(D_{\theta}^{W}\sigma_r^2 )|^p)^\frac{1}{p}&\leq C_t(r-\theta)^{H-\frac12},\\
 (E|(D_{\theta}^{W}D_{r}^{W}\sigma_u^2 )|^p)^\frac{1}{p}&\leq C_t(u-r)^{H-\frac12}(u-\theta)^{H-\frac12},
 \end{split}
\end{align*}
for some positive constant $C_t$.
\end{h}

\begin{h}\label{Hyp7}
Hypothesis \ref{Hyp4} holds and the following quantity
\begin{align*}
\begin{split}
\frac{1}{(T-t)^{ \frac{3}{2}+ H}} \mathbb{E}_t\int_t^T \left( \int_s^T D_s^{W}\sigma_u^2du \right)ds 
\end{split}
\end{align*}
where $ \mathbb{E}_t$ denotes the expectation with respect to  $\mathcal{F}_t$, 
has a finite limit as $T \to t$
\end{h}

Under the above hypotheses, the at-the-money  short-end skew slope of the implied volatility can be computed from the following adaptation of Theorem 6.3 in Al\`os, Le\'on and Vives (2007) (see also Theorem 7.5.2 in Al\`os and Garc\'ia-Lorite (2021)).
\begin{theorem} 
\label{tskew}
Assume that Hypotheses \ref{Hyp5}, \ref{Hyp6} and \ref{Hyp7}, hold for some $t \in [0,T]$. Then 
\begin{equation}
\lim_{T \to t} (T-t)^{\frac{1}{2}-H}\partial_kI_t(T,k^{*}) = \frac{\rho}{ 2\sigma_t^2}\lim_{T \to t}\frac{1}{ (T-t)^{\frac{3}{2}+H}} \mathbb{E}_t\left( \int_t^T\left(\int_r^TD_r^{W}\sigma_u^2du \right) dr \right).
\label{skewIV}
\end{equation}
\label{recallth}
\end{theorem}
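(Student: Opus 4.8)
The plan is to reduce the statement to the Al\`os decomposition formula plus a short-time analysis of the Black--Scholes Greeks. I would parametrize the Black--Scholes price by total variance, writing $C_{BS}(x,k,w)$ for the price with log-price $x$, log-strike $k$ and total variance $w$, and use the heat relation $\partial_w C_{BS}=\tfrac12(\partial_{xx}^2-\partial_x)C_{BS}=:\tfrac12\mathcal{L}C_{BS}$. Setting $X_s=\ln S_s$ and letting $M_s=\mathbb{E}_s\!\int_s^T\sigma_u^2\,du$ be the conditional remaining variance, Clark--Ocone gives $dM_s=U_s\,dW_s-\sigma_s^2\,ds$ with $U_s=\int_s^T\mathbb{E}_s[D_s^{W}\sigma_u^2]\,du$. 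Applying It\^o's formula to $C_{BS}(X_s,k,M_s)$ on $[t,T]$, using $M_T=0$ (so the terminal value is the payoff) and the heat relation (which makes the $X$- and $M$-drifts cancel the $\partial_{xx}$-term), and taking $\mathbb{E}_t$, I obtain the exact decomposition
\begin{equation*}
V_t=C_{BS}(X_t,k,M_t)+\frac{\rho}{2}\mathbb{E}_t\!\int_t^T\!\sigma_s U_s\,\partial_x\mathcal{L}C_{BS}(X_s,k,M_s)\,ds+\frac18\mathbb{E}_t\!\int_t^T\!U_s^2\,\mathcal{L}^2C_{BS}(X_s,k,M_s)\,ds .
\end{equation*}
The factor $U_s$ carries precisely the Malliavin object $\int_s^T D_s^{W}\sigma_u^2\,du$ appearing on the right-hand side of the theorem.

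Next I would turn this price decomposition into a statement about the skew. Differentiating the defining identity $V_t=C_{BS}(X_t,k,I_t^2(T,k)(T-t))$ in $k$ and solving for $\partial_k I_t$ gives
\begin{equation*}
\partial_k I_t(T,k)=\frac{\partial_k V_t-\partial_k C_{BS}(X_t,k,I_t^2(T-t))}{\partial_\sigma C_{BS}(X_t,k,I_t^2(T-t))},
\end{equation*}
where the denominator is the Black--Scholes vega, behaving at the money like $S_t\sqrt{T-t}/\sqrt{2\pi}$. Since $M_s,X_s,\sigma_s,U_s$ do not depend on $k$, differentiating the decomposition in $k$ only hits the $C_{BS}$ factors, and I would split the numerator into three pieces: the leading-term discrepancy $\partial_k C_{BS}(X_t,k,M_t)-\partial_k C_{BS}(X_t,k,I_t^2(T-t))$, the $\rho$-correction $\tfrac{\rho}{2}\mathbb{E}_t\!\int_t^T\sigma_sU_s\,\partial_k\partial_x\mathcal{L}C_{BS}\,ds$, and the $U^2$-correction.

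The core of the argument is then a power count in $\tau:=T-t$, driven by two scalings: the Malliavin bound of Hypothesis \ref{Hyp6}, giving $U_s=O((T-s)^{H+\frac12})$, and the self-similar behaviour of the at-the-money Greeks, $\partial_x^m\mathcal{L}^nC_{BS}=O(S_t\,w^{\frac12-n-\frac{m}{2}})$ with $w\sim\sigma_t^2(T-s)$. Feeding these in, the $\rho$-correction to the numerator is of order $\tau^{H}$, whereas the leading-term discrepancy is of order $\tau^{H+\frac12}$ (it is $\partial_k\partial_w C_{BS}=O(\tau^{-\frac12})$ times $M_t-I_t^2\tau=O(\tau^{H+1})$) and the $U^2$-correction is of order $\tau^{2H}$; both are negligible relative to $\tau^{H}$ as $\tau\to0$. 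Dividing the surviving $\rho$-term by the vega produces the order $\tau^{H-\frac12}$ predicted by the theorem, and after freezing $\sigma_s\to\sigma_t$, $X_s\to X_t$ and using the tower property to remove the inner conditioning in $U_s$, the expectation collapses to $\sigma_t\,\mathbb{E}_t\!\int_t^T\!\int_r^T D_r^{W}\sigma_u^2\,du\,dr$; matching the explicit at-the-money values of the Greeks and the vega yields exactly the prefactor $\frac{\rho}{2\sigma_t^2}$. Hypothesis \ref{Hyp7} (together with Hypothesis \ref{Hyp4}) is what guarantees that the rescaled integral $\tau^{-(3/2+H)}\mathbb{E}_t\!\int_t^T\!\int_r^T D_r^{W}\sigma_u^2\,du\,dr$ converges, so that the limit on the right-hand side exists.

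The hard part will be making the ``freezing'' rigorous: the Greeks inside the time integral are evaluated at the running $(X_s,M_s)$ rather than at the at-the-money point $(X_t,\sigma_t^2\tau)$, so I would need uniform-in-$s$ control of the replacement error and a dominated-convergence argument to justify interchanging $T\to t$ with both the $ds$-integral and $\mathbb{E}_t$. This is where the full strength of the hypotheses enters: Hypothesis \ref{Hyp5} keeps $\sigma_s$, $v_t$ and $I_t$ bounded away from $0$ and $\infty$ (so the Greek scalings are uniform and the frozen volatility converges to $\sigma_t$), Hypothesis \ref{Hyp6} supplies the $L^p$ bounds on $U_s$ and on $D_\theta^W U_s$ needed to dominate the correction integrands, and Hypothesis \ref{Hyp7}/\ref{Hyp4} provides the regularity and the existence of the normalized limit. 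Once these estimates are in place the result follows as the stated adaptation of Theorem 6.3 in Al\`os, Le\'on and Vives (2007), and the remaining computations are the routine evaluation of the at-the-money Black--Scholes Greeks.
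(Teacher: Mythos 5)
Your overall strategy is the right one, and it is in fact the strategy behind the paper's proof: the paper does not prove Theorem \ref{tskew} at all, but imports it from Theorem 6.3 of Al\`os, Le\'on and Vives (2007) (Theorem 7.5.2 in Al\`os and Garc\'ia-Lorite (2021)), whose proof is exactly a decomposition of the price around $C_{BS}$ plus differentiation in $k$, division by the vega, and a short-time power count. Your variant --- classical It\^o applied to $C_{BS}(X_s,k,M_s)$ with the adapted conditional forward variance $M_s=\mathbb{E}_s\int_s^T\sigma_u^2du$ and Clark--Ocone, instead of the anticipating It\^o formula with the future variance $\int_s^T\sigma_u^2du$ and Skorohod integrals used in the cited references --- is legitimate and gives the same exact decomposition, with $U_s=\mathbb{E}_s\int_s^T D_s^W\sigma_u^2du$ in place of the anticipating kernel; the tower property reconciles the two in the limit. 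Your order estimates (numerator $\rho$-term of order $\tau^{H}$, leading discrepancy $\tau^{H+\frac12}$, $U^2$-term $\tau^{2H}$, vega $\tau^{\frac12}$, with $\tau=T-t$) are all correct.

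There is, however, one genuine gap, and it sits precisely at the constant-matching step that produces the stated right-hand side. Your power count evaluates the Greeks at the \emph{running} remaining variance $w\sim\sigma_t^2(T-s)$, and your description of the freezing step only mentions $\sigma_s\to\sigma_t$, $X_s\to X_t$ and the tower property. If one implements this literally --- freeze the Greek pointwise at $w=\sigma_t^2(T-s)$ and evaluate at the money --- the numerator becomes (a constant times) $\mathbb{E}_t\int_t^T\Lambda_s(T-s)^{-3/2}ds$ with $\Lambda_s=\int_s^T D_s^W\sigma_u^2du$, which under Hypothesis \ref{Hyp7} alone is neither guaranteed to converge after normalization nor proportional to $\mathbb{E}_t\int_t^T\Lambda_s\,ds$; for $\Lambda_s\sim c(T-s)^{H+\frac12}$ it would produce a spurious factor $\frac{H+3/2}{H}$ relative to \eqref{skewIV}. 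The missing idea is the heat-semigroup (martingale) property of the Black--Scholes Greeks: since $\partial_k\partial_x\mathcal{L}C_{BS}$ satisfies the same relation $\partial_w=\frac12\mathcal{L}$ as $C_{BS}$, the process $\partial_k\partial_x\mathcal{L}C_{BS}(X_s,k,M_s)$ is, to leading order with frozen volatility, a conditional martingale, so $\mathbb{E}_t$ of it equals the Greek evaluated at the \emph{initial} point $(X_t,k^*,\sigma_t^2(T-t))$ --- an $s$-independent factor of order $S_t\,\tau^{-3/2}$. Only after this replacement of the variance argument $(T-s)$ by $(T-t)$ does the $ds$-integral collapse to $\mathbb{E}_t\int_t^T\int_r^T D_r^W\sigma_u^2\,du\,dr$ with the factor $\frac{\rho}{2\sigma_t^2}\tau^{-2}$, so that multiplying by $\tau^{\frac12-H}$ yields exactly \eqref{skewIV} under Hypothesis \ref{Hyp7}. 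This conditional-expectation smoothing (and the accompanying $L^p$ estimates for the error of decorrelating $U_s$ from the Greek) is the technical core of the cited proof, and your sketch would fail without it even though every asymptotic \emph{order} you state is correct.
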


\subsection{The skew in local volatility models}

Now let us consider a local volatility model as in (\ref{localvol}). First of all we recall the case of differentiable local volatilities with bounded derivatives.

\subsubsection{Regular local volatilities}
If the local volatility function is bounded with bounded derivatives, the model (\ref{localvol}) satisfies Hypotheses \ref{Hyp5}, \ref{Hyp6} and \ref{Hyp7} for any $t\in [0,T]$, with $\sigma_u=\sigma(u,S_u)$ and $\gamma=0$. Then Theorem \ref{tskew} gives us that, for every fixed $t$, $\lim_{T \to t} \partial_kI_t(T,k^{*})$ is finite. Moreover, as 
$$
D_r^{W}\sigma^2(u,S_u)=2\sigma(u,S_u)\partial_x \sigma(u,S_u)D_r^WS_u
$$
and $D_r^WS_u\to \sigma(u,S_u)S_u$ as $r\to u$, we get  (see Al\`os and Garc\'ia-Lorite (2021))
\begin{eqnarray}
&&\lim_{T \to t} \partial_kI_t(T,k^{*}) \nonumber\\
&&= \frac{1}{ 2\sigma^2(t,S_t)}\lim_{T \to t}\frac{1}{ (T-t)^{2}} \mathbb{E}_t\left( \int_t^T\left(\int_r^TD_r^{W}\sigma^2(u,S_u)du \right) dr \right)\nonumber\\
&&= \frac{1}{ \sigma^2(t,S_t)}\lim_{T \to t}\frac{1}{ (T-t)^{2}} \mathbb{E}_t\left( \int_t^T\left(\int_r^T\partial_x \sigma(u,S_u) \sigma (u,S_u)D_r S_udu \right) dr \right)\nonumber\\
&&= \lim_{T \to t}\frac{1}{ (T-t)^{2}} \mathbb{E}_t\left( \int_t^T\left(\int_r^T\partial_x \sigma(u,S_u)  S_udu \right) dr \right)
\nonumber\\
&&= \lim_{T \to t}\frac{1}{ (T-t)^{2}} \mathbb{E}_t\left( \int_t^T\partial_x \sigma(u,S_u)  S_u\left(\int_t^udr\right) du \right).
\end{eqnarray}
Now, notice that
$$
S_u\partial_x\sigma(u,S_u)=\partial_X\hat{\sigma}(u,X_u),
$$
where $\hat{\sigma}$ denotes the local volatility function in terms of the log-price $X$. Then
\begin{eqnarray}
\label{onehalf}
&&\lim_{T \to t} \partial_kI_t(T,k^{*}) \nonumber\\
&& =\lim_{T \to t}\frac{1}{ (T-t)^{2}} \mathbb{E}_t\left( \int_t^T\partial_x \hat{\sigma}(u,X_u) \left(\int_t^u dr \right) du \right)\nonumber\\
&&=\frac{1}{2}\partial_X\hat{\sigma}(t,X_t).
\label{skewIVfinite}
\end{eqnarray}
\begin{remark}
The above is consistent with the one-half rule,  this heuristic relationship was introduced by Derman, Kani, and Zou (1996). According to this rule, in the short end,  the local volatility varies with the asset price twice as fast as the implied volatility varies with the strike. This rule has been proved for stochastic volatility models via different techniques (see Derman, Kani, and Zou (1996), Gatheral (2006), Lee (2001), or Al\`os and Garc\'ia-Lorite (2021)).
\end{remark}

\subsubsection{The rough local volatility case}
Now, assume that  Hypotheses \ref{Hyp1}, \ref{Hyp2}, and \ref{Hyp3}  hold for some $\gamma >0$.  Under these hypotheses the local volatility has a singularity at $(0,S_0)$, and then we refer to it as a 'rough local volatility'. Now, as
$$
D_{\theta}^{W}\sigma_u^2=2\sigma(u,S_u)\partial_x\sigma (u,S_u)D_{\theta}^{W}S_u
$$
and
\begin{eqnarray}
D_\theta^W D_r^W\sigma^2(u,S_u)
&=&2(\partial_x\sigma(u,S_u))^2D_\theta^WS_uD_r^WS_u\nonumber\\
&+&2\sigma(u,S_u)\partial_{SS}^2\sigma(u,S_u)D_\theta^WS_uD_r^WS_u\nonumber\\
&+&2\sigma(u,S_u)\partial_x\sigma(u,S_u)D_\theta^WD_r^WS_u,
\end{eqnarray}
a direct computation leads to $$(E(D_{\theta}^{W}\sigma_u^2)^p)^\frac{1}{p}\leq Cu^{-\gamma}$$ and $$(E(D_{\theta}^{W}D_r^W\sigma_u^2)^p)^\frac{1}{p}\leq Cu^{-2\gamma}.$$
Notice that, if $t>0$, $u^{-\gamma}<t^{-\gamma}$ and then Hypothesis \ref{Hyp6} holds with $H=\frac12$, which implies that the one-half rule is satisfied. Nevertheless, if  $t=0$, this hypothesis holds with $H=\frac12-\gamma$. Then, the one-half rules does not hold in this case, and we have the following result

\begin{theorem} 
\label{TeoremaLVIV}
Consider a local volatility model satisfying Hypotheses \ref{Hyp1}, \ref{Hyp2}, \ref{Hyp3}, \ref{Hyp4}, and \ref{Hyp7} with $t=0$. Then 
$$
\lim_{T \to 0} T^{\frac{1}{2}-H}\partial_kI_0(T,k^{*}) = \frac{1}{ \frac{3}{2}+H}\lim_{T \to 0}T^{\frac{1}{2}-H}\partial_x\hat\sigma(T,X_T),
$$
\label{skewlocal}
where $\hat\sigma$ denotes the local volatility function in terms of the log-price $X$ and $H=\frac12-\gamma$.
\end{theorem}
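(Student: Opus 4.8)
The plan is to read off the short-end implied-volatility skew from Theorem \ref{tskew}, specialised to the one-factor local volatility model, and then to evaluate the resulting Malliavin functional asymptotically. Since (\ref{localvol}) is driven by the single Brownian motion $W$, we take $\rho=1$ and $\sigma_u=\sigma(u,S_u)$, so that Theorem \ref{tskew} with $t=0$ gives
\begin{equation*}
\lim_{T \to 0} T^{\frac{1}{2}-H}\partial_kI_0(T,k^{*}) = \frac{1}{ 2\sigma_0^2}\lim_{T \to 0}\frac{1}{ T^{\frac{3}{2}+H}} \mathbb{E}_0\left( \int_0^T\left(\int_r^T D_r^{W}\sigma_u^2\,du \right) dr \right),
\end{equation*}
where $\sigma_0=\sigma(0,S_0)$. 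Hypothesis \ref{Hyp7} guarantees that this limit exists, so the whole task reduces to identifying it in terms of the local volatility skew.

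First I would substitute the Malliavin derivative $D_r^{W}\sigma_u^2=2\sigma(u,S_u)\partial_x\sigma(u,S_u)D_r^{W}S_u$ and apply Fubini on the region $0\le r\le u\le T$, rewriting the double integral as
\begin{equation*}
\int_0^T\left(\int_r^T D_r^{W}\sigma_u^2\,du\right)dr=\int_0^T 2\sigma(u,S_u)\partial_x\sigma(u,S_u)\left(\int_0^u D_r^{W}S_u\,dr\right)du.
\end{equation*}
Next I would use the a.s.\ limit $D_r^{W}S_u\to\sigma(u,S_u)S_u$ from (\ref{first}) together with the identity $S_u\partial_x\sigma(u,S_u)=\partial_X\hat\sigma(u,X_u)$ to replace, in the short-end regime, the inner integral $\int_0^u D_r^{W}S_u\,dr$ by $u\,\sigma(u,S_u)S_u$, and hence the integrand by $2\sigma^2(u,S_u)\,\partial_X\hat\sigma(u,X_u)\,u$; the surviving factor satisfies $\sigma^2(u,S_u)\to\sigma_0^2$ as $u\to 0$.

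At this point the problem becomes a one-dimensional asymptotic one. Writing $\gamma=\frac12-H$, Hypothesis \ref{Hyp2} controls $\partial_X\hat\sigma(u,X_u)$ by $Cu^{-\gamma}=Cu^{H-\frac12}$, and I would set $L:=\lim_{u\to 0}u^{\frac12-H}\partial_X\hat\sigma(u,X_u)$, whose existence is tied to Hypothesis \ref{Hyp7}; then the integrand behaves like $2\sigma_0^2 L\,u^{H+\frac12}$ and
\begin{equation*}
\int_0^T 2\sigma_0^2 L\,u^{H+\frac12}\,du=\frac{2\sigma_0^2 L}{H+\frac32}\,T^{H+\frac32}.
\end{equation*}
Dividing by $T^{\frac32+H}$ and by $2\sigma_0^2$ yields $\frac{L}{H+\frac32}$, and since $L=\lim_{T\to 0}T^{\frac12-H}\partial_X\hat\sigma(T,X_T)$ this is precisely the claimed identity (with $H=\frac12-\gamma$).

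The main obstacle is turning the two replacements above into rigorous limit interchanges rather than heuristics. For the first, I would control the error $\int_0^u\bigl(D_r^{W}S_u-\sigma(u,S_u)S_u\bigr)\,dr$ through the explicit representation (\ref{derlv}) and the uniform moment bounds $E_r(D_r^{W}S_u)^p\le c_p'$ established in Section \ref{local-volatility-model}, so that dominated convergence applies after the $T^{-(\frac32+H)}$ normalisation; the extra power of $u$ gained from the inner $dr$-integration is what renders this error negligible. For the second, the power-law identification is essentially a Ces\`aro/L'H\^opital statement: once $u^{\frac12-H}\partial_X\hat\sigma(u,X_u)$ converges, its $u$-weighted average over $(0,T)$ normalised by $T^{H+\frac32}$ converges to the same constant divided by $H+\frac32$. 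Finally, the passage from the conditional expectation on the left to the un-averaged quantity $\partial_X\hat\sigma(T,X_T)$ on the right is harmless because $X_u\to X_0$ and $\sigma_u\to\sigma_0$ force all limits to be deterministic, and Hypothesis \ref{Hyp7} is exactly what secures their existence.
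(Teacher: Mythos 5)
Your proposal is correct and takes essentially the same route as the paper's own proof: both specialise Theorem \ref{tskew} with $\rho=1$ and $\sigma_u=\sigma(u,S_u)$, use the chain rule $D_r^{W}\sigma_u^2=2\sigma(u,S_u)\partial_x\sigma(u,S_u)D_r^{W}S_u$ together with (\ref{derlv})/(\ref{first}) and continuity to reduce the double integral to $\int_0^T u\,\partial_x\hat\sigma(u,X_u)\,du$, and then extract the constant $\frac{1}{H+\frac32}$ (the paper via l'H\^opital, you via the equivalent direct power-law integration). The only cosmetic differences are that you apply Fubini before substituting the limit of $D_r^{W}S_u$ and that you make explicit the $\rho=1$ specialisation and the dominated-convergence error control that the paper leaves implicit.
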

\begin{proof}
As
\begin{eqnarray}
D_r^W\sigma^2(u,S_u)&=&2\sigma(u,S_u)D_r^W(\sigma(u,S_u))\nonumber\\
&=&2\sigma(u,S_u)\partial_x\sigma(u,S_u)D_r^WS_u,
\end{eqnarray}
Then, a direct application of Theorem \ref{recallth} and Equation (\ref{derlv}) give us that 
\begin{eqnarray}
&&\lim_{T \to 0} T^{\frac{1}{2}-H}\partial_kI_0(T,k^{*}) \\
&&= \frac{1}{ 2\sigma(0,S_0)^2}\lim_{T \to 0}\frac{1}{ T^{\frac{3}{2}+H}}\int_0^T\left(\int_r^T 2\sigma(u,S_u)\partial_x\sigma(u,S_u)\sigma(r,S_r)S_rdu \right) dr. \end{eqnarray}
Now, because of the continuity of the local volatility function $\sigma$ and the asset price $S$ we can write
\begin{eqnarray}
&&\lim_{T \to 0} T^{\frac{1}{2}-H}\partial_kI_0(T,k^{*}) = \lim_{T \to 0}
\frac{1}{ T^{\frac{3}{2}+H}}\int_0^T\left(\int_r^T S_u\partial_x\sigma(u,S_u)du \right) dr.
\end{eqnarray}
Now, as
$$
S_u\partial_x\sigma(u,S_u)=\partial_X\hat{\sigma}(u,X_u),
$$
where $\hat{\sigma}$ denotes the local volatility function in terms of the log-price $X$. Then
\begin{eqnarray}
&&\lim_{T \to 0} T^{\frac{1}{2}-H}\partial_kI_0(T,k^{*}) \\
&& = \lim_{T \to 0}\frac{1}{ T^{\frac{3}{2}+H}}\int_0^T\left(\int_r^T \partial_x\hat\sigma(u,x_u) du \right) dr \\
&&= \lim_{T \to 0}\frac{1}{ T^{\frac{3}{2}+H}}\int_0^T u \partial_x\hat\sigma(u,X_u) du.
\end{eqnarray}
Now, a direct application of l'H\^opital rule gives us that
\begin{eqnarray}
&&\lim_{T \to 0} T^{\frac{1}{2}-H}\partial_kI_0(T,k^{*}) \\
&& = \lim_{T \to 0}\frac{1}{ T^{\frac{3}{2}+H}}\int_0^T\left(\int_r^T \partial_x\hat\sigma(u,X_u) du \right) dr \\
&&= \lim_{T \to 0}\frac{1}{ (\frac{3}{2}+H)}T^{\frac{1}{2}+H}T \partial_x\hat\sigma(T,X_T) du \\
&&=\frac{1}{ \frac{3}{2}+H}\lim_{T \to 0} T^{\frac{1}{2}-H}\partial_x\hat\sigma(T,X_T),
\end{eqnarray}
as we wanted to prove.
\end{proof}

\begin{remark}The above result recovers the recent results by Bourgey, De Marco, Friz, and Pigato (2022).
\end{remark}

\begin{remark} \label{skewpowerlaw}This relationship between the local and implied volatility implies that, given a model for the volatility process $\sigma$, the local and the implied skew slopes follow the same power law.
\end{remark}

\begin{remark} \label{12rough} The above result seems to be in contradiction with real market practice, where the one-half rule is observed (see Derman, Kani, and Zou (1996)). However, both facts are compatible. Even when a rough local volatility will satisfy the above $\frac{1}{H+\frac32}$ rule, local volatilities computed in real market practice are never 'rough', but they are regularized near maturity, as we can see in the following example.
\end{remark}

\begin{example}
In Figure \ref{example_8} we present the empirical ratio between the at-the-money skew slope of the implied and the local volatility corresponding to the EUROSTOXX50E index, with date July 14 2023. In the plot the shortest market maturity is $T=0.03$. Then, the computation of the local volatility from $T=0$ up to $T=0.03$ was computed via a regularization procedure \footnote{For example we can suppose a local volatility flat between 0 to first Monte Carlo, or even to use a discretization of the following asymptotic expression 
$$
\lim_{T \to 0} I_0(T,K)=\frac{\log(\frac{K}{S_T})}{\int_{S_T}^{k} \frac{1}{\sigma(T,u)} du}
$$}, which leads in practice to a regular local volatility function, with a (even high) finite skew slope limit. Then, the one-half rule is satisfied, as predicted in Section 3.2.1.

\newpage

\begin{figure}[h]
\centering
\includegraphics[width=8cm]{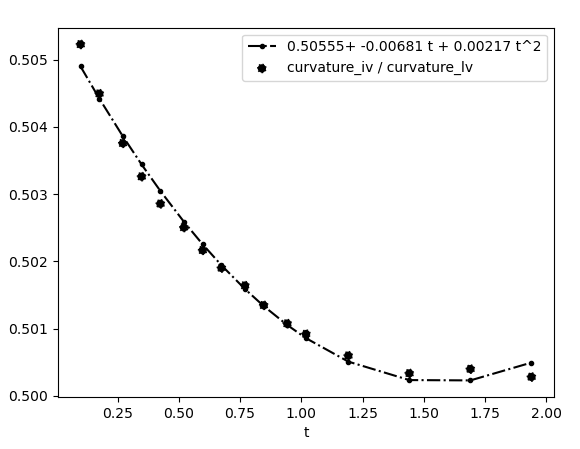}
\caption{Empirical $\frac{\partial_{k}^2I_0(T,k^{*})}{\partial^2_{x}\hat\sigma(T,X_T)}$ for market data of EUROSTOXX50E index.}  
\label{example_8}
\end{figure}
\end{example}

\section{The curvature}
Consider $t\in [0,T]$ and assume the following
\begin{h}\label{Hyp8}
There exists $H\in (0,1)$ such that, for all $t<\tau<\theta<r<u$,
$$
(E|(D_{\tau}^{W}D_{\theta}^{W}D_{r}^{W}\sigma_u^2 )|^p) ^\frac{1}{p}\leq C (u-r)^{H-\frac12}(u-\theta)^{H-\frac12}(u-\tau)^{H-\frac12}.
$$
\end{h}

\begin{h}\label{Hyp9}
There exists $H\in (0,1)$ such that the following quantities have a finite limit as $T\to t$
$$
\frac{1}{(T-t)^{2+2 H}} \mathbb{E}_t\int_t^T \left( \mathbb{E}_r\int_r^T D_r^{W}\sigma_u^2du \right)^2dr 
$$
$$
\frac{1}{(T-t)^{2+2 H}} \mathbb{E}_t\int_t^T \left( \int_s^T D_s^{W}\left( \sigma_r \int_s^T D_s^{W}\sigma_u^2du \right) dr \right) ds 
$$
$$
\frac{1}{ T^{2+2H}}\mathbb{E} \left( \int_0^T \int_s^T D_s^{W}\left( \sigma_r \int_r^T D_r^{W}\sigma_u^2du \right)drds \right).
$$
\end{h}

Let us recall the following result, that is an adaptation of Theorem 4.6 in Al\`os and Le\'on (2017) (see also Theorem 8.3.3 in Al\`os and Garc\'ia-Lorite (2021)).
\begin{theorem}\label{TeoremaLVVI} 
Take $t\in [0,T]$. Then, under Hypothesis \ref{Hyp5}-\ref{Hyp9}, and for every fixed $t \in [0,T],$ 
\begin{align}
\begin{split}
&\lim_{T \to t} (T-t)^{1-2H}\partial_{kk}^2I_t(T,k^{*}) \\
&=  \frac{1}{4\sigma_t^5}\lim_{T \to t} \frac{1}{(T-t)^{2+2 H}}\mathbb{E}\left(\int_t^T \left(\mathbb{E}_r\int_r^TD_r^{W}\sigma_u^2du\right)^2dr  \right)\\
& - \frac{3\rho^2}{ 2\sigma_t^5}\lim_{T \to t}\frac{1}{(T-t)^{3+2H}}\mathbb{E}\left( \int_t^T\left(\int_r^TD_r^{W}\sigma_u^2du \right) dr \right)^2  \\
&+\frac{\rho^2}{ \sigma_t^4}\lim_{T \to t}\frac{1}{ (T-t)^{2+2H}}\mathbb{E} \left( \int_t^T \int_s^T D_s^{W}\left( \sigma_r \int_r^T D_r^{W}\sigma_u^2du \right)drds \right).
\end{split}
\label{recall}
\end{align}
\label{thcurvature}
\end{theorem}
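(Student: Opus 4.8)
The plan is to adapt the proof of Theorem 4.6 in Al\`os and Le\'on (2017), which rests on Al\`os's decomposition formula for the call price. First I would introduce the future averaged variance $v_t := \left(\frac{1}{T-t}\int_t^T \sigma_s^2\,ds\right)^{1/2}$ and, applying the anticipating It\^o formula to $C_{BS}(s,T,X_s,k,v_s)$ on $[t,T]$, write
$$
C_t(T,k) = \mathbb{E}_t\left[C_{BS}(t,T,X_t,k,v_t)\right] + (\text{correction terms}),
$$
where the correction terms are conditional expectations of Black--Scholes Greeks weighted by the Malliavin functional $\Lambda_s := \int_s^T D_s^W\sigma_u^2\,du$ and its iterates. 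The contribution carrying a factor $\rho$ governs the skew, while the contributions carrying $\rho^2$ together with the $\rho$-independent convexity contribution are the ones that survive at second order in $k$.

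Next I would extract the curvature from the implicit relation $C_t(T,k)=C_{BS}(t,T,X_t,k,I_t(T,k))$. Differentiating twice in $k$ and solving for $\partial_{kk}^2 I_t$ yields
$$
\partial_\sigma C_{BS}\,\partial_{kk}^2 I_t = \partial_{kk}^2 C_t - \partial_{kk}^2 C_{BS} - 2\,\partial_{k\sigma}^2 C_{BS}\,\partial_k I_t - \partial_{\sigma\sigma}^2 C_{BS}\,(\partial_k I_t)^2,
$$
with all Greeks evaluated at the money. Here $\partial_{kk}^2 C_t$ is supplied by the decomposition formula, $\partial_k I_t$ by the skew analysis of Theorem \ref{recallth}, and the remaining Greeks are reduced using the standard at-the-money identities linking $\partial_{k\sigma}^2 C_{BS}$, $\partial_{\sigma\sigma}^2 C_{BS}$ and the vega $\partial_\sigma C_{BS}$. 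This is where the three structural contributions emerge: the convexity piece produces the $\rho$-independent term $\frac{1}{4\sigma_t^5}\mathbb{E}\left(\int_t^T(\mathbb{E}_r\Lambda_r)^2\,dr\right)$; the square of the skew produces $-\frac{3\rho^2}{2\sigma_t^5}\mathbb{E}\left(\int_t^T\Lambda_r\,dr\right)^2$; and the genuinely second-order Malliavin term produces $\frac{\rho^2}{\sigma_t^4}\mathbb{E}\int_t^T\int_s^T D_s^W(\sigma_r\Lambda_r)\,dr\,ds$.

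The adaptation to the rough regime then amounts to tracking powers of $(T-t)$. Using the bounds of Hypotheses \ref{Hyp6} and \ref{Hyp8}, namely $(\mathbb{E}|D_\theta^W\sigma_r^2|^p)^{1/p}\leq C(r-\theta)^{H-1/2}$ and its iterated analogues, one finds $\Lambda_r\sim(T-r)^{H+1/2}$, so that the three functionals above scale like $(T-t)^{2+2H}$ (first and third) and $(T-t)^{3+2H}$ (second). After division by the indicated powers the three terms are of order one, whence $\partial_{kk}^2 I_t$ is of order $(T-t)^{2H-1}$ and multiplication by $(T-t)^{1-2H}$ produces a finite limit; Hypothesis \ref{Hyp9} is precisely what guarantees that the three rescaled functionals individually converge. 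It then remains to show that every remainder term generated by the decomposition is of strictly lower order and vanishes after multiplication by $(T-t)^{1-2H}$.

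The main obstacle I anticipate is the second-order bookkeeping: unlike the skew, the curvature requires retaining every term up to the order that survives the rescaling, and in particular disentangling the three contributions and pinning down their coefficients and signs through the at-the-money Black--Scholes Greek identities. Establishing the remainder estimates uniformly --- that cross terms and higher Malliavin iterates contribute only at order $o((T-t)^{2H-1})$ --- is the delicate point, and this is exactly where the sharp Malliavin bounds of Hypotheses \ref{Hyp6} and \ref{Hyp8} and the convergence assumption of Hypothesis \ref{Hyp9} are used.
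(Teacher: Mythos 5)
Your plan is essentially the same approach as the paper's: the paper does not prove Theorem \ref{thcurvature} from scratch but states it as an adaptation of Theorem 4.6 in Al\`os and Le\'on (2017) (see also Theorem 8.3.3 in Al\`os and Garc\'ia-Lorite (2021)), whose proof proceeds exactly as you describe --- the decomposition formula with the future averaged variance $v_t$, twice differentiating the implicit relation $C_t(T,k)=C_{BS}(t,T,X_t,k,I_t(T,k))$ at the money, and isolating the three contributions with their coefficients $\frac{1}{4\sigma_t^5}$, $-\frac{3\rho^2}{2\sigma_t^5}$, $\frac{\rho^2}{\sigma_t^4}$. Your scaling bookkeeping is also correct: under Hypotheses \ref{Hyp6} and \ref{Hyp8} one indeed gets $\int_s^T D_s^W\sigma_u^2\,du = O\bigl((T-s)^{H+\frac12}\bigr)$, so the three functionals scale as $(T-t)^{2+2H}$, $(T-t)^{3+2H}$, $(T-t)^{2+2H}$ respectively, Hypothesis \ref{Hyp9} supplies the convergence of the rescaled limits, and multiplication by $(T-t)^{1-2H}$ yields the stated finite limit.
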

Similar arguments as in Section 3 give us that, for fixed $t>0$ and under hypotheses \ref{Hyp1}-\ref{Hyp4}, \ref{Hyp5}-\ref{Hyp9} with $\gamma=0$ the expression $\partial_{kk}^2I_t(T,k^{*}) $ tends to a finite quantity. Nevertheless, if  $t=0$, \ref{Hyp5}-\ref{Hyp9} hold for  $H=\frac12-\gamma$. In this case, we can prove the following result.

\begin{theorem} 
\label{curvature}
Under Hypothesis \ref{Hyp1}- \ref{Hyp4}, 
\begin{eqnarray}
&&\lim_{T \to 0} T^{1-2H}\partial_{kk}^2I_0(T,k^{*}) \nonumber\\
&&=\frac{1}{\sigma(0,S_0)}\lim_{u\to 0}u^{1-2H}( \partial_x\hat\sigma(u,X_u))^2\nonumber\\
&&\times\left[\frac{3}{(H+\frac32)(H+1)}-\frac{6}{(H+\frac32)^2}        +\frac{1}{2(H+1)}       \right]\nonumber\\
&+&\frac{1}{2(1+H)}\lim_{T \to 0}T^{1-2H}\partial^2_{xx}\hat\sigma(T,X_T),
\end{eqnarray}
where $H=\frac12-\gamma$.
\end{theorem}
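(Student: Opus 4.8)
The plan is to feed the three short-end limits on the right-hand side of (\ref{recall}) into the local volatility setting and to evaluate each of them explicitly. Since the model (\ref{localvol}) is driven by the single Brownian motion $W$, in the notation of (\ref{themodel}) we have $\rho=1$ and $\sigma_u=\sigma(u,S_u)$; I would specialize Theorem~\ref{thcurvature} at $t=0$ and write $\sigma_0=\sigma(0,S_0)$. The whole proof then reduces to identifying the three constants coming from the three limits in (\ref{recall}), with their respective prefactors $\tfrac{1}{4\sigma_0^5}$, $-\tfrac{3}{2\sigma_0^5}$ (here $\rho^2=1$) and $\tfrac{1}{\sigma_0^4}$, and to adding them. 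The first two involve only the first Malliavin derivative $D_r^W\sigma_u^2$, while the third requires the mixed second derivative $D_s^W(\sigma_r\int_r^T D_r^W\sigma_u^2\,du)$.

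Next I would compute the relevant Malliavin derivatives in the short end. Using $D_r^W\sigma^2(u,S_u)=2\sigma(u,S_u)\partial_x\sigma(u,S_u)D_r^WS_u$ together with (\ref{derlv})--(\ref{first}) and the identity $S_u\partial_x\sigma(u,S_u)=\partial_X\hat\sigma(u,X_u)$, one gets
\[
D_r^W\sigma^2(u,S_u)\longrightarrow 2\sigma_0^2\,\partial_X\hat\sigma(u,X_u).
\]
For the second derivative I would start from the expansion of $D_\theta^W D_r^W\sigma^2(u,S_u)$ given in Section~\ref{local-volatility-model}, insert (\ref{second}) for $D_s^WD_r^WS_u$, and use $S_u^2\partial_{SS}^2\sigma(u,S_u)=\partial_{XX}^2\hat\sigma(u,X_u)-\partial_X\hat\sigma(u,X_u)$, arriving at
\[
D_s^WD_r^W\sigma^2(u,S_u)\approx 2\sigma_0^2\big(\partial_X\hat\sigma(u,X_u)\big)^2+2\sigma_0^2\,\partial_X\hat\sigma(u,X_u)\,\partial_X\hat\sigma(r,X_r)+2\sigma_0^3\,\partial_{XX}^2\hat\sigma(u,X_u).
\]
The crucial point is that the contribution of $D_s^WD_r^WS_u$ must keep the drift $a$ evaluated at the \emph{inner} time $r$, which produces the cross term $\partial_X\hat\sigma(u,X_u)\partial_X\hat\sigma(r,X_r)$: when $\gamma>0$ the factor $a(r,S_r)=\partial_X\hat\sigma(r,X_r)+\sigma(r,S_r)$ is singular at $r=0$ and cannot be frozen at time $u$ as in the regular case.

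With these expressions in hand I would substitute and integrate. Writing $\partial_X\hat\sigma(u,X_u)\sim u^{H-1/2}$ (so that $u^{1-2H}(\partial_X\hat\sigma)^2$ and $u^{1-2H}\partial_{XX}^2\hat\sigma$ have the finite limits appearing in the statement, with $H=\tfrac12-\gamma$), each of the three constants reduces, after Fubini (using $\int_0^T\!\int_s^T\phi(r)\,dr\,ds=\int_0^T r\,\phi(r)\,dr$ once the integrand is $s$-independent in the limit) and the rescaling $u=Tv,\ r=Tw$, to an elementary Beta-type integral. Collecting terms, the first limit, the second limit, and the $\partial_X\hat\sigma(u,X_u)\partial_X\hat\sigma(r,X_r)$ part of the third limit combine into the bracket $\tfrac{3}{(H+3/2)(H+1)}-\tfrac{6}{(H+3/2)^2}+\tfrac{1}{2(H+1)}$ multiplying $\tfrac{1}{\sigma_0}\lim_{u\to0}u^{1-2H}(\partial_x\hat\sigma(u,X_u))^2$, while the $\partial_{XX}^2\hat\sigma(u,X_u)$ part of the third limit yields $\tfrac{1}{2(1+H)}\lim_{T\to0}T^{1-2H}\partial_{xx}^2\hat\sigma(T,X_T)$, which is exactly the claimed identity. (As an internal check, at $H=\tfrac12$ the bracket equals $-\tfrac16$, consistent with the regular case.)

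I expect two places to be the main obstacle. The first is the short-end limit of the second Malliavin derivative just discussed: getting the cross term right, i.e. not freezing the singular drift at the wrong time, is what fixes the $H$-dependence of the bracket (a naive freezing gives $(H+\tfrac12)$-denominators instead of $(H+\tfrac32)$ and $(H+1)$). The second is the rigorous justification of interchanging limit, expectation and integration: one must use the $L^p$-bounds and continuity provided by Hypotheses~\ref{Hyp1}--\ref{Hyp4} to freeze $\sigma$, $S$ at their $t=0$ values, to replace the conditional expectations $\mathbb{E}_r$ by the deterministic power-law profiles, and then to pass to the limit by dominated convergence after rescaling; controlling the singular integrands near $u=0$ is where Hypotheses~\ref{Hyp8}--\ref{Hyp9} (finiteness of the rescaled limits) are needed.
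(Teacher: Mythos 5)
Your proposal is correct and follows essentially the same route as the paper: specializing Theorem~\ref{thcurvature} to the local volatility model with $\rho=1$ and $\sigma_u=\sigma(u,S_u)$, decomposing into the three limits (with the third split according to where the Malliavin derivative lands), freezing coefficients at $t=0$, replacing $\partial_x\hat\sigma(u,X_u)$ by its $u^{H-\frac12}$ power-law profile, and evaluating the resulting elementary integrals to assemble the bracket. In particular, your short-end expression for $D_s^WD_r^W\sigma^2(u,S_u)$, including the cross term $\partial_x\hat\sigma(u,X_u)\partial_x\hat\sigma(r,X_r)$ with the singular drift kept at the inner time $r$ (which yields the paper's $r^{H-\frac12}\int_r^T u^{H-\frac12}du$ integrand), matches the paper's computation of $T_3^2$, and your $H=\frac12$ consistency check agrees with Remark~\ref{curvature_regular_diffusion}.
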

\begin{proof}
Because of Theorem \ref{thcurvature}, we know that the limit
$$
\lim_{T \to 0} T^{1-2H}\partial_{kk}^2I_0(T,k^{*})
$$
is finite. Moreover, as local volatilities replicate vanilla prices, the result in Theorem  \ref{thcurvature} is also true if we replace the spot volatility $\sigma_u$ by the local volatility $\sigma (u,S_u)$. Then we can write
\begin{align}
\begin{split}
&\lim_{T \to 0} T^{1-2H}\partial_{kk}^2I_0(T,k^{*}) \\
&=  \frac{1}{4\sigma(0,S_0)^5}\lim_{T \to 0} \frac{1}{T^{2+2 H}}\mathbb{E}_t\left(\int_t^T \left(\mathbb{E}_r\int_r^TD_r^{W}\sigma^2(u,S_u)du\right)^2dr  \right)\\
& - \frac{3}{ 2\sigma(0,S_0)^5}\lim_{T \to 0}\frac{1}{ T^{3+2H}}\mathbb{E}_t\left( \int_t^T\left(\int_r^TD_r^{W}\sigma^2(u,S_u)du \right) dr \right)^2  \\
&+\frac{1}{ \sigma(0,S_0)^4}\lim_{T \to 0}\frac{1}{ T^{2+2H}}\mathbb{E}_t \left( \int_t^T \int_s^T D_s^{W}\left( \sigma (r,S_r) \int_r^T D_r^{W}\sigma^2(u,S_u)du \right)drds \right)\\
&=T_1+T_2+T_3.
\end{split}
\label{recall2}
\end{align}
Now the proof is decomposed into several steps.\\

{\it Step 1} Let us study the term $T_1$. Let us study the term $T_1$. As
\begin{eqnarray}
D_r^W\sigma^2(u,S_u)&=&2\sigma(u,S_u)\partial_x\sigma(u,S_u)D_r^WS_u,
\end{eqnarray}
and because of the continuity of $\sigma$,  and $S$, we get
\begin{eqnarray}
T_1&=&\frac{1}{\sigma(0,S_0)}\lim_{T \to 0} \frac{1}{T^{2+2 H}}\mathbb{E}\left(\int_0^T \left(\mathbb{E}_r\int_r^T\partial_x\sigma(u,S_u)S_udu\right)^2dr  \right)\\
&=&\frac{1}{\sigma(0,S_0)}\lim_{T \to 0} \frac{1}{T^{2+2 H}}\mathbb{E}\left(\int_0^T \left(\mathbb{E}_r\int_r^T\partial_x\hat\sigma(u,X_u)du\right)^2dr  \right).
\end{eqnarray}
Because of Theorem \ref{TeoremaLVIV} we know that
$$
u^{\frac12-H}\partial_x\hat\sigma(u,X_u)
$$
tends to a finite limit. Then we can write
\begin{eqnarray}
\label{termeT1a}
&&T_1\nonumber\\
&&=\frac{1}{\sigma(0,S_0)}\lim_{u\to 0} u^{1-2H}(\partial_x\hat\sigma(u,X_u)))^2\lim_{T \to 0} \frac{\int_0^T \left(\int_r^Tu^{H-\frac12}du\right)^2dr }{(T-t)^{2+2 H}}\nonumber\\
&&=\frac{1}{\sigma(0,S_0)}\lim_{u\to 0} u^{1-2H}(\partial_x\hat\sigma(u,X_u)))^2\lim_{T \to 0} \frac{\int_0^T[(T-t)^{H+\frac12}-(r-t)^{H+\frac12}]^2dr }{(H+\frac12)^2T^{2+2 H}}\nonumber\\
&&=\frac{1}{\sigma(0,S_0)(H+\frac12)^2}\left(1-\frac{2}{(H+\frac32)}+\frac{1}{2H+2}\right)\lim_{u\to 0} u^{1-2H}(\partial_x\hat\sigma(u,X_u)))^2.
\end{eqnarray}
Now, notice that
$$
1-\frac{2}{(H+\frac32)}+\frac{1}{2H+2}=\frac{(H+\frac12)^2}{(H+\frac32)(H+1)},
$$
and then
\begin{eqnarray}
\label{termeT1}
&&T_1=\frac{1}{\sigma(0,S_0)}\frac{1}{(H+\frac32)(H+1)}\lim_{u\to 0} u^{1-2H}(\partial_x\hat\sigma(u,X_u)))^2.
\end{eqnarray}
{\it Step 2}
In a similar way,
\begin{eqnarray}
\label{termeT2}
&&T_2\nonumber\\
&&= - \frac{6}{ \sigma (0,S_0)}\lim_{u\to 0} u^{1-2H}(\partial_x\hat\sigma(u,X_u)))^2\lim_{T \to 0}\frac{1}{ (T-t)^{3+2H}}\left(\int_0^T \left(\int_r^T u^{H-\frac12} du\right)dr \right)^2 \nonumber\\
&&= - \frac{6}{ \sigma (t,S_t)}\frac{1}{(H+\frac32)^2 }\lim_{u\to 0} u^{1-2H}(\partial_x\hat\sigma(u,X_u)))^2.
\end{eqnarray}
{\it Step 3} Let us now study the term $T_3$. Similar arguments as before allow us to write 
\begin{eqnarray}
\label{T3}
T_3&=&\frac{1}{ \sigma(0,S_0)^4}\lim_{T \to 0}\frac{1}{ T^{2+2H}}\mathbb{E} \left( \int_0^T \int_s^T D_s^{W}\left( \sigma (r,S_r) \int_r^T D_r^{W}\sigma^2(u,S_u)du \right)drds \right)\nonumber\\
&=&\frac{1}{ \sigma(0,S_0)^4}\lim_{T \to 0}\frac{1}{ T^{2+2H}}\mathbb{E} \left( \int_0^T \int_s^T \left(D_s^{W} \sigma (r,S_r) \int_r^T D_r^{W}\sigma^2(u,S_u)du \right)drds \right)\nonumber\\
&+&\frac{1}{ \sigma(0,S_0)^4}\lim_{T \to 0}\frac{1}{ T^{2+2H}}\mathbb{E} \left( \int_0^T \int_s^T \left( \sigma (r,S_r) \int_r^TD_s^{W} D_r^{W}\sigma^2(u,S_u)du \right)drds \right)\nonumber\\
&=&T_3^1+T_3^2.
\end{eqnarray}
As $D_s^{W} \sigma (r,S_r)=\partial_x\sigma (r,S_r)D_s^WS_r$, the continuity of $\sigma$ and $S$ allows us to write
\begin{eqnarray}
\label{T31}
T_3^1&=&\frac{2}{ \sigma(0,S_0)}\lim_{T \to 0}\frac{1}{ T^{2+2H}}\mathbb{E} \left( \int_0^T \int_s^T \left(\partial_x \sigma (r,S_r) S_r\int_r^T \partial_x\sigma(u,S_u)S_udu \right)drds \right)\nonumber\\
&=&\frac{2}{ \sigma(0,S_0)}\lim_{T \to 0}\frac{1}{ T^{2H}}\mathbb{E}\left( \int_0^T \int_s^T \left(\partial_x \hat\sigma (r,X_r) \int_r^T \partial_x\hat\sigma(u,X_u)du \right)drds \right).
\end{eqnarray}
Then, as $
u^{\frac12-H}\partial_x\hat\sigma(u,X_u)
$
has a finite limit, we get
\begin{eqnarray}
\label{Terme31}
T_3^1
&=&\frac{1}{ \sigma(0,S_0)}\lim_{u\to 0}u^{1-2H}( \partial_x\hat\sigma(u,X_u))^2\nonumber\\
&&\hspace{1cm}\times\frac{1}{ T^{2+2H}}\left( \int_0^T \left( \int_r^T u^{H-\frac12} du\right)^2dr\right)\nonumber\\
&=&\frac{1}{\sigma(0,S_0)}\frac{1}{(H+\frac32)(H+1)}\lim_{u\to 0}u^{1-2H}( \partial_x\hat\sigma(u,X_u))^2.
\end{eqnarray}
On the other hand, 
\begin{eqnarray}
\label{T312}
T_3^2&=&\frac{1}{ \sigma(0,S_0)^4}\lim_{T \to 0}\frac{1}{ (T-t)^{2+2H}}\mathbb{E} \left( \int_0^T \int_s^T \left( \sigma (r,S_r) \int_r^TD_s^{W} D_r^{W}\sigma^2(u,S_u)du \right)drds \right)\nonumber\\
&=&\frac{1}{ \sigma(0,S_0)^3}\lim_{T \to 0}\frac{1}{ T^{2+2H}}\mathbb{E} \left( \int_0^T \int_s^T \left(  \int_r^TD_s^{W} D_r^{W}\sigma^2(u,S_u)du \right)drds \right)\nonumber\\
\end{eqnarray}
 Now, notice that
\begin{eqnarray}
D_\theta^W D_r^W\sigma^2(u,S_u)
&=&2(\partial_x\sigma(u,S_u))^2D_\theta^WS_uD_r^WS_u\nonumber\\
&+&2\sigma(u,S_u)\partial_{SS}^2\sigma(u,S_u)D_\theta^WS_uD_r^WS_u\nonumber\\
&+&2\sigma(u,S_u)\partial_x\sigma(u,S_u)D_\theta^WD_r^WS_u
\end{eqnarray}
Then
\begin{eqnarray}
\label{T3121}
T_3^2&=&\frac{2}{\sigma (0,S_0)}\lim_{u\to 0}u^{1-2H}( \partial_x\hat\sigma(u,X_u))^2\nonumber\\
&&\times\frac{1}{ T^{2+2H}}\left( \int_0^T \int_s^T \ \int_r^T u^{2H-1}du drds \right)\nonumber\\
&+&2\lim_{T \to 0}\frac{1}{ (T-t)^{2+2H}}\mathbb{E}\left( \int_0^T \int_s^T \left(  \int_r^T[\partial^2_{SS}\sigma(u,S_u)S_u^2+\partial\sigma(u,S_u)S_u ]du \right)drds \right)\nonumber\\
&+&\frac{2}{\sigma (t,S_t)}\lim_{u\to 0}u^{1-2H}( \partial_x\hat\sigma(u,X_u))^2\nonumber\\
&&\times\frac{1}{ T^{2+2H}}\left( \int_0^T \int_s^T (r-t)^{H-\frac12}\left(  \int_r^Tu^{H-\frac12}du \right)drds \right)\nonumber\\
&=&\frac{1}{\sigma (0,S_0)}\lim_{u\to 0}u^{1-2H}( \partial_x\hat\sigma(u,X_u))^2\frac{1}{2(H+1)}\left(1+\frac{2}{(H+\frac32)}\right)\nonumber\\
&+&2\lim_{T \to 0}\frac{1}{ T^{2+2H}}\mathbb{E} \left( \int_0^T \int_s^T \left(  \int_r^T[\partial^2_{SS}\sigma(u,S_u)S_u^2+\partial_x\sigma(u,S_u)S_u ]du \right)drds \right)\nonumber\\
&=&\frac{1}{2\sigma(0,S_0)(H+1)}\left(1+\frac{2}{(H+\frac32)}\right)\lim_{u\to 0}u^{1-2H}( \partial_x\hat\sigma(u,X_u))^2\nonumber\\
&+&\lim_{T \to 0}\frac{1}{ T^{2+2H}}\mathbb{E} \left( \int_0^T\int_s^T\int_r^T \partial^2_{xx}\hat\sigma(u,X_u)u^2du drds \right).
\end{eqnarray}
Notice that, as all the other limits exist and are finite, the last term in the above equation is finite. Then, a direct application of l'H\^opital rule allows us to write
\begin{eqnarray}
\label{Terme32}
T_3^2&=&\frac{1}{2\sigma(0,S_0)(H+1)}\left(1+\frac{2}{(H+\frac32)}\right)\lim_{u\to 0}u^{1-2H}( \partial_x\hat\sigma(u,X_u))^2\nonumber\\
&+&\frac{1}{2(1+H)}\lim_{T \to 0}T^{1-2H}\partial^2_{xx}\hat\sigma(T,X_T).
\end{eqnarray}
Now, (\ref{termeT1}), (\ref{termeT2}), (\ref{Terme31}), and (\ref{Terme32}) give us that
\begin{eqnarray}
&&\lim_{T \to 0} T^{1-2H}\partial_{kk}^2I_0(T,k^{*}) \nonumber\\
&&=\frac{1}{\sigma(0,S_0)}\lim_{u\to 0}u^{1-2H}( \partial_x\hat\sigma(u,X_u))^2\nonumber\\
&&\times\left[\frac{3}{(H+\frac32)(H+1)}-\frac{6}{(H+\frac32)^2}        +\frac{1}{2(H+1)}       \right]\nonumber\\
&+&\frac{1}{2(1+H)}\lim_{T \to 0}T^{1-2H}\partial^2_{xx}\hat\sigma(T,X_T),
\end{eqnarray}
as we wanted to prove.
\end{proof}
\begin{remark}\label{curvature_regular_diffusion}
Notice that, if $H=\frac12$, the above reduces to 
\begin{eqnarray}
&&\lim_{T \to 0} T^{1-2H}\partial_{kk}^2I_0(T,k^{*}) \nonumber\\
&=&-\frac{1}{6\sigma(0,S_0)}\lim_{u\to 0}u^{1-2H}( \partial_x\hat\sigma(u,X_u))^2\nonumber\\
&+&\frac{1}{3}\lim_{T \to 0}T^{1-2H}\partial^2_{xx}\hat\sigma(T,X_T),
\end{eqnarray}
according to Equation (8.4.3) in Al\`os and Garc\'ia-Lorite (2021). On the other hand, in the uncorrelated case $\rho=0$ it reads
\begin{eqnarray}
&&\lim_{T \to 0}\partial_{kk}^2I_0(T,k^{*}) =\frac{1}{2(H+1)}\lim_{T \to 0}\partial^2_{xx}\hat\sigma(T,X_T).
\end{eqnarray}
In particular, if $\rho=0$ and $H=\frac12$, we get
\begin{eqnarray}
&&\lim_{T \to 0}\partial_{kk}^2I_0(T,k^{*}) =\frac{1}{3}\lim_{T \to 0}\partial^2_{xx}\hat\sigma(T,X_T),
\end{eqnarray}
according to the results in Hagan, Kumar, Lesniewski, and Woodward
(2002).
\end{remark}
\begin{remark}
As
$$
\lim_{T\to t}T^{1-2H}( \partial_x\hat\sigma(T,X_T))^2=4\lim_{T \to 0} T^{1-2H}(\partial_{k}I_0(T,k^{*}))^2
$$
the result in Theorem \ref{curvature} can be written as
\begin{eqnarray} \label{curvature_h}
&&\frac{1}{2(1+H)}\lim_{T \to 0}T^{1-2H}\partial^2_{xx}\hat\sigma(T,X_T)\nonumber\\
&&=\lim_{T \to 0} T^{1-2H}\partial_{kk}^2I_0(T,k^{*}) \nonumber\\
&&-\frac{4}{\sigma(0,S_0)}\lim_{T \to 0} T^{1-2H}(\partial_{k}I_0(T,k^{*}))^2\nonumber\\
&&\times\left[\frac{3}{(H+\frac32)(H+1)}-\frac{6}{(H+\frac32)^2}        +\frac{1}{2(H+1)} \right].
\end{eqnarray}
\end{remark}

\begin{remark} \label{ultima}This relationship between the local and implied volatility implies that, given a model for the volatility process $\sigma$, the local and the implied curvature satisfy the same power law.
\end{remark}

\section{Numerical Results}
This section is devoted to the numerical study of the relationship between the skews and curvatures of local and implied volatilities. Towards this end, we generate local volatility models satisfying the hypotheses required in the previous section, and we compare with the corresponding implied volatility behaviour.

\begin{example}[The skew]\label{skew_example}
Consider the rough Bergomi model with the volatility process given by
$$
\sigma_t=\sigma_0\exp\left(\nu\sqrt{2H}W_t^H-\frac12 \nu^2 t^{2H}\right),
$$
where $W_t^H=\int_0^t(t-s)^{H-\frac12}dW_s$ (see Bayer, Friz, and Gatheral(2016)). The set of parameters used in the simulation is the following $S_0=100$, $\nu =1.1$, $\sigma_0=0.3$, and  $\rho=-0.6$. We estimate the corresponding at-the-money implied and local volatility skews of a European call option as a function of maturity.  One can show that the following representation of $\sigma^{2}(T,K)$ holds
\begin{equation}\label{local_vol_representation}
\sigma^{2}(T,K) = \frac{\mathbb{E}_t\left(\sigma^{2}_t \phi(d(t,T,S_t,K)) \right)}{\mathbb{E}_t\left(\phi(d(t,T,S_t,K))  \right)}
\end{equation}
where $\phi(\cdot)$ is a density function of a standard Gaussian variable, $v_{t,T} =  \sqrt{\frac{\int_{t}^{T} \sigma^2_u du}{T-t}}$ and $d(t,T,S_t,K)=\frac{\log(\frac{K}{S_t})+ \frac{(T-t)v_{t,T}}{2} - \rho \int_{t}^{T} \sigma_u dB_u}{\sqrt{1-\rho^2} \sqrt{T-t}v_{t,T}}$. To compute the at-the-money skew of the local volatility, we compute $\partial_K$ in (\ref{local_vol_representation}) i.e
\begin{align}\label{skew_lv_representation}
&\partial_{K}\sigma(T,K) = \nonumber \\
& \frac{\mathbb{E}_t\left(\frac{\sigma^{2}_{T}d(t,T,x_t,K,v_{t,T})}{v^{2}_t K \sqrt{T-t}}\phi(d(t,T,x_t,K,v_{t,T})) \right) - \sigma^{2}(T,K)\mathbb{E}_t\left(\frac{d(t,T,x_t,K,v_{t,T}) \phi(d(t,T,x_t,K,v_{t,T}))}{K v^{2}_t \sqrt{T-t}} \right) }{2\sigma(T,K) \mathbb{E}_t\left(\frac{\phi(d(t,T,x_t,K,v_{t,T}))}{v_{t,T}} \right)}
\end{align}
To compute the at-the-money skew of the implied volatility we use finite difference method  \footnote{Another way to get the skew is to take the derivative with respect to $K$ in
\begin{equation*}
\mathbb{E}\left((S_{T}- K)^{+}\right) = BS(T,K,I(T,K)).
\end{equation*}
Then we get the following expression for the skew
\begin{equation*}
\partial_{k} I(T,K) = -\frac{\mathbb{E}\left(I(S_T > K) \right)- \partial_{K}BS(T,K,I(T,K)}{\partial_{\sigma}BS(T,K,I(T,K))}.
\end{equation*}
Notice that the term $\mathbb{E}\left(I(S_T > K) \right)$ can be estimated in the same simulation where we get the price of the option. We have checked both approaches and we have confirmed that they lead to identical results.} i.e
\begin{equation*}
\partial_K I(T,K) \approx \frac{I(T,K+h)-I(T,K-h)}{2h}.
\end{equation*}
Finally, we compute the ratio of the implied volatility skew over the local volatility skew for $H=\frac{1}{2}$ and $H=0.2$ and we show, that in the limit, 
$\frac{\partial_{k}I_0(T,k^{*})}{\partial_{x}\hat\sigma(T,X_T)}$ goes to $\frac{1}{H+\frac{3}{2}}=0.5$ and $\frac{1}{H+\frac{3}{2}}=0.588$, respectively. These coincide with the theoretical results provided in Theorem \ref{TeoremaLVIV} and Remark \ref{skewpowerlaw}. In Figure 2 we show the results that we have obtained from the Monte Carlo simulation
\begin{figure}[ht]
\begin{subfigure}{.5\textwidth}
  \centering
  \includegraphics[width=.9\linewidth]{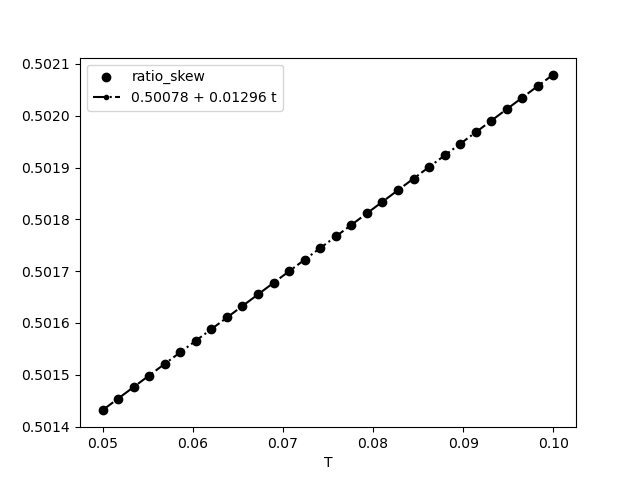}  
  \caption{$H=0.5$}
  \label{fig1:sub-first}
\end{subfigure}
\begin{subfigure}{.5\textwidth}
  \centering
  \includegraphics[width=.9\linewidth]{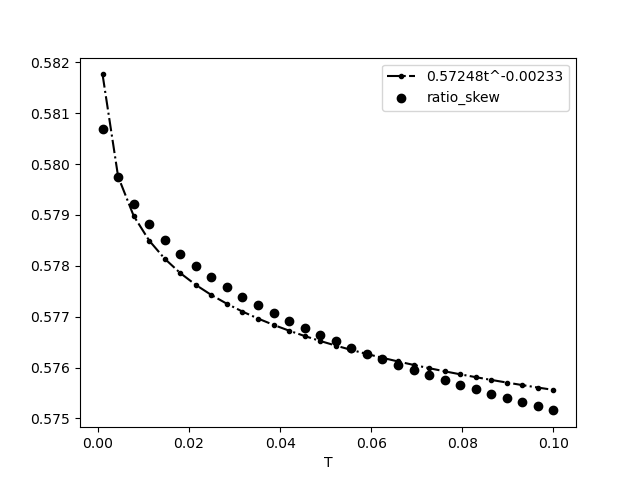}  
  \caption{$H=0.2$}
  \label{fig1:sub-second}
\end{subfigure}
\caption{$\frac{\partial_{k}I_0(T,k^{*})}{\partial_{x}\hat\sigma(T,X_T)}$ as a function of $T$}
\end{figure}

\end{example}

\begin{example}
The goal of this example is to show that the behaviour of the curvature in the SABR model
$$
dF_t=\sigma_t F_t^\beta (\rho dW_t +\sqrt{1-\rho^2} dB_t)
$$ 
with $\beta=1$, where the volatility is given by
$$
\sigma_t=\alpha\exp\left(-\frac{\nu^2}{2}t+\nu B_t\right),
$$
is in accordance with Remark \ref{curvature_regular_diffusion}. In order to show this we use the following approximation of the local volatility equivalent (see \cite{HKLW} for more details) for a log-normal SABR model
\begin{equation*}
\sigma(T,K) \approx \alpha \sqrt{1 + 2 \rho \nu y(K) + \nu^{2} y^{2}(K)}
\end{equation*}  
with $y(K)= \frac{\log(\frac{K}{S_0})}{\alpha}$. Then we have the following expressions for the skew and curvature
\begin{align*}
\partial_K \sigma(T,K)&= \frac{\alpha^{2}y^{\prime}(K)\left(\rho \nu + \nu^{2} y(K) \right)}{\sigma(T,K)} \\
\partial_{K,K} \sigma(T,K) &= \frac{\alpha^{2}y^{\prime \prime}(K)(\rho \nu + \nu y(K))+ (\alpha \nu y^{\prime}(K))^{2} -  (\partial_K \sigma(T,K))^{2}}{\sigma(T,K)}
\end{align*}
Therefore, we have that
\begin{equation}\label{curvature_log_lv}
\partial_{k,k}\hat{\sigma}(T,k) = K \partial_K \sigma(T,K) + K^{2}\partial_{K,K} \sigma(T,K).
\end{equation}
In addition, we use the approximation for the implied volatility proposed in \cite{HKLW} for the log-normal case
\begin{equation*}
I(T,K) \approx \alpha \frac{z}{x(z)}\left(1+ \left(\frac{1}{4}\rho \nu \alpha + \frac{2 - 3\rho^{2}}{24} \nu^{2} \right) T \right)
\end{equation*}
where $z=\frac{\nu}{\alpha} \log \left({\frac{S_0}{K}}\right)$ and $x(z)= \log \left(\frac{\sqrt{1-2\rho z + z^{2}}}{1-\rho}\right)$. It is straightforward to show that
\begin{align*}
\partial_{K} I(T,K) &= - \nu \frac{f^{\prime}(z)}{K}m(T) \\
\partial_{K,K} I(T,K) &= \left(\nu \frac{f^{\prime}(z)}{K^{2}} + \nu^{2} \frac{f^{\prime \prime}(z)}{\alpha K^{2}}\right)m(T)
\end{align*}
with $m(T)=1+ \left(\frac{1}{4}\rho \nu \alpha + \frac{2 - 3\rho^{2}}{24} \nu^{2} \right) T$ and $f(z)=\frac{z}{x(z)}$.
Therefore
\begin{equation}\label{curvature_log_iv}
\partial_{k,k} I(T,k^{*}) = K^{*} \partial_{K}I(T,K^{*}) + K^{2}\partial_{K,K}I(T,K^{*}).
\end{equation}
where $K^{*}=\exp(k^{*})$.
Firstly, note that by Remark \ref{curvature_regular_diffusion} the curvature of the local volatility is bigger than the curvature of the implied volatility. The difference between the curvatures for the SABR model in the short-end limit term is equal to $\frac{\rho^{2}\nu^{2}}{6 \alpha}$. In order to check it we run the Monte Carlo simulation with the following set of parameters for the SABR model $\alpha=0.3, \nu=0.6, \rho=-0.6$ and $S_0=100$ (notice than then  $\frac{\rho^{2}\nu^{2}}{6 \alpha}=\frac{0.6^4}{6 \times 0.3}=0.072$). The results are presented in Figure \ref{dos}. Recall that Remark \ref{curvature_regular_diffusion}, gives us that for $\rho=0$ 
$$
\lim_{T \to 0} \partial_{kk}^2I_0(T,k^{*}) =\frac{1}{3} \partial^2_{xx}\hat\sigma(T,X_T)
$$
This property can be observed in Figure \ref{3}.

\begin{figure}[ht]
\begin{subfigure}{.5\textwidth}
  \centering
  \includegraphics[width=1.0\linewidth]{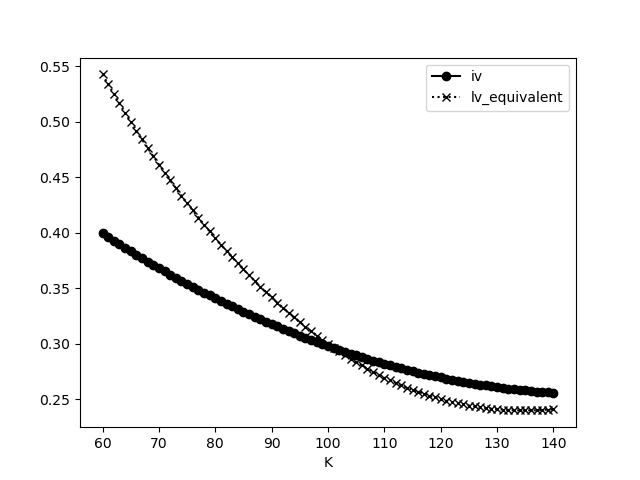}  
  \caption{Local volatility vs Implied volatility at $T=0.5$}
  \label{fig2:sub-first}
\end{subfigure}
\begin{subfigure}{.5\textwidth}
  \centering
  \includegraphics[width=1.0\linewidth]{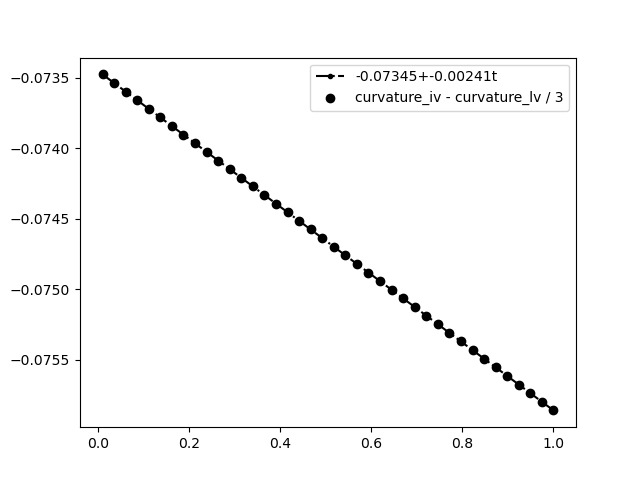}  
  \caption{$\partial_{kk}^2I_0(T,k^{*}) - \partial^2_{xx}\hat\sigma(T,X_T)$ as a function of $T$}
  \label{fig2:sub-second}
\end{subfigure}
\caption{Local volatility equivalent vs Implied volatility and differences between the curvatures at the short-term.}
\label{dos}
\end{figure}

\newpage

\begin{figure}[h]
\centering
\includegraphics[width=8cm]{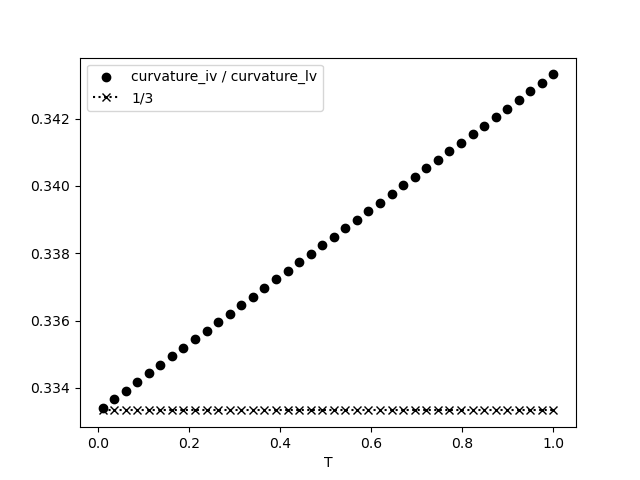}
\caption{$\frac{\partial_{kk}^2I_0(T,k^{*})}{\partial^2_{xx}\hat\sigma(T,X_T)}$ as a function of $T$.}  
\label{3}
\end{figure}
\end{example}

\begin{example}
In this example, we show that the power law followed by the curvature of the implied volatility and the equivalent local volatility is of order $(T-t)^{1-2H}$  when $0 \leq H < \frac{1}{2}$ for the rough Bergomi model. In order to compute the curvature of the implied volatility avoiding the noise of the Monte Carlo simulation, we use the approximation for the implied volatility proposed by the authors in \cite{MG}. The following dynamic for the underlying is assumed

\begin{align*}
\frac{S_t}{\beta(s_t)} &= \sigma_t dW_t \\
\frac{d\xi_t(s)}{\xi_t(s)} &= k(s-t) dB_t \quad t < s.
\end{align*}
where $\sigma_t=\sqrt{\xi_t(t)}$, $\beta$ is a positive continuous function, $\xi_t(s)=\mathbb{E}_t\left(\xi_s(s) \right)$ and $k(\tau)=\eta \sqrt{2H} t^{H-\frac{1}{2}}$. We must note that different choices of $H$ and $\beta(\cdot)$ lead to different models. For example, when $\beta(s)=s$ we recover the rough Bergomi model, which is our case of interest. The implied volatility approximation proposed in \cite{MG} is

\begin{equation}\label{G-M_Formula}
I(T,k)= I(T,0) \frac{|y(T,k)|}{\sqrt{G_{A}(y(k_{\beta}(k),T))}}
\end{equation}
where
\begin{align*}
y(T,k)&= \frac{k(T-t)}{U_t(T)}\\
U_t(T) &=\sqrt{\frac{1}{T-t}\int_{t}^{T} \xi_{t}(s)ds}\\
k &= \log(K) - k^{*} \\ 
k_{\beta}(k) &= \int_{s}^{k} \frac{ds}{\beta(s)}
\end{align*}
To estimate $\partial_{k} I(T,k)$ and $\partial_{k,k} I(T,k)$ we will use finite difference i.e
\begin{align*}
\partial_{k} I(T,k) &= \frac{I(T,k+h) - I(T,k-h)}{2h} \\
\partial_{k,k} I(T,k) &= \frac{I(T,k+h) - 2 I(T,k) + I(T,k-h)}{h^{2}}  
\end{align*}
On other hand, to estimate the local volatility equivalent and the skew we use (\ref{local_vol_representation}) and (\ref{skew_lv_representation}), respectively. The curvature is also estimated by finite differences i.e
\begin{equation*}
\partial_{K,K} \sigma(T,K) \approx \frac{\partial_K \sigma(T,K + h) - \partial_K \sigma(T,K - h)}{2h}.
\end{equation*}
In Figure 5 we can see that the power law for the curvature of the implied volatility and local volatility coincides. In the Monte Carlo simulation we assumed the following values of parameters $H=0.2$, $\nu=1.1$ and $\rho=-0.6$. The result justifies  Remark \ref{ultima}.

\begin{figure}[h]
\centering
\includegraphics[width=9cm]{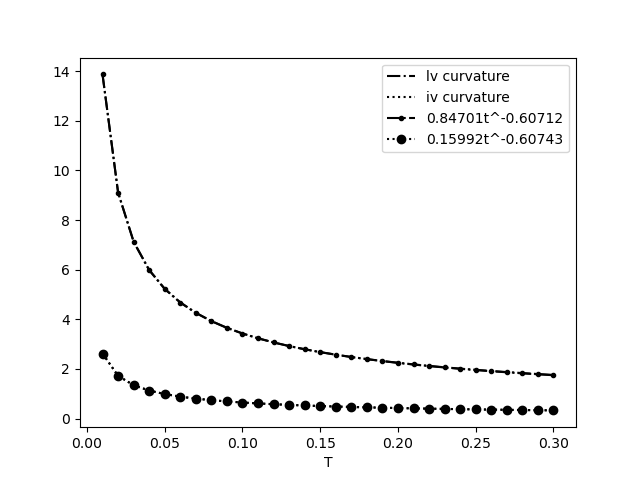}
\caption{Short-end curvatures of the local volatility and the at-the-money implied volatility.}  
\end{figure}

\end{example}

\newpage

\end{document}